\documentclass{article}
\usepackage[utf8]{inputenc}
\usepackage{fullpage}
\usepackage{hyperref}
\usepackage{mathtools}
\DeclarePairedDelimiter\abs{\lvert}{\rvert}

\usepackage{amsfonts,amsmath}
\usepackage{soul}
\usepackage{amsthm}
\usepackage{mathrsfs}
\usepackage{authblk}

\DeclareMathOperator{\diag}{diag}

\newtheorem{thm}{Theorem}

\newtheorem{pro}{Proposition}
\newtheorem{lem}{Lemma}

\title{Local Dirac energy decay in the 5D Myers-Perry geometry using an integral spectral representation for the Dirac propagator}
\author{Qiu Shi Wang$^1$}
\affil{$^1$\small Department of Mathematics and Statistics\\ McGill University \\ Montréal, QC, H3A 2K6, Canada\\ qiu.s.wang@mail.mcgill.ca}
\date{\today}

\begin{document}

\maketitle
\begin{abstract}
We consider the massive Dirac equation in the exterior region of the 5-dimensional Myers-Perry black hole. Using the resulting ODEs obtained from the separation of variables of the Dirac equation, we construct an integral spectral representation for the solution of the Cauchy problem with compactly supported smooth initial data. We then prove that the probability of presence of a Dirac particle to be in any compact region of space decays to zero as $t\to\infty$, in analogy with the case of the Dirac operator in the Kerr-Newman geometry \cite{Finster_2003}.
\end{abstract}
\section{Introduction}

Chandrasekhar's separation of variables for the Dirac equation in the Kerr geometry \cite{chandrasekhar} provides an avenue for the study of the long-term behaviour of solutions of the massive Dirac equation in 4-dimensional black hole geometries. In \cite{Finster_2003}, Finster, Kamran, Smoller and Yau obtain decay estimates for spinor fields in the exterior region of the non-extreme Kerr-Newman geometry using an integral spectral representation that they established for solutions to the Cauchy problem for the Dirac equation. In \cite{daude:tel-00011974,AIF_2004__54_6_2021_0}, Daudé studies energy decay and scattering of Dirac spinors in a class of spacetimes that includes the Kerr-Newman geometry. In the latter geometry, Batic \cite{Batic_2007} develops a time-dependent scattering theory of Dirac particles using an integral representation for the Dirac propagator.

Similar questions may be asked concerning the Dirac equation in higher-dimensional black hole geometries; the present paper is concerned with the Myers-Perry metrics. Recall that the latter metrics, derived by Myers and Perry \cite{MYERS1986304}, are the $(d>4)$-dimensional generalizations of the Kerr metric, with rotations along $\left \lfloor{(d-1)/2}\right \rfloor$ independent axes. In particular, the 5D Myers-Perry black hole has two independent angular momenta.

The separation of variables for the Dirac equation in the Kerr-Newman geometry can be done equivalently using the Newman-Penrose formalism or the orthonormal local Lorentz frame formalism. In the absence of a suitable generalization of the Newman-Penrose formalism to 5 dimensions, Wu \cite{Wu_2008} chooses an orthonormal local Lorentz frame (pentad) to formulate the Dirac equation in the 5D Myers-Perry geometry and separate it into radial and angular ODEs. In the same geometry, Daudé and Kamran \cite{Daud__2012} use Wu's pentad and a ``locally non-rotating'' pentad that they construct to prove local energy decay for Dirac spinors using Mourre theory, thus generalizing the decay result of \cite{Finster_2003} to 5 dimensions. The objective of the present work is to obtain a similar result to that of \cite{Daud__2012}, namely the local decay of Dirac spinors in the 5D Myers-Perry geometry, by constructing an integral spectral representation for the solution to the Cauchy problem with compactly supported smooth initial data, in analogy with what was done in the Kerr-Newman geometry in \cite{Finster_2003}.

The paper is organized as follows. In Section \ref{prelims}, we introduce the 5D Myers-Perry metric and the Dirac equation in the orthonormal pentad formalism. In Section \ref{sepvar}, we perform separation of variables on the form of the Dirac equation obtained by \cite{Wu_2008,Daud__2012} and derive systems of radial and angular ODEs analogous to those obtained in \cite{Finster_2003} for the Kerr-Newman metric. In Section \ref{Hilbert}, we construct a suitable Hilbert space structure on the space of Dirac spinors, then derive in Section \ref{fourierintegral} an integral spectral representation (\ref{intrep}) for the Dirac propagator (Theorem \ref{intreptheorem}). The decay of Dirac spinors in any compact region (Theorem \ref{maintheorem}) follows from Theorem \ref{intreptheorem} and the Riemann-Lebesgue lemma.

Throughout the text, we will highlight similarities and differences between the results obtained for the Kerr and 5D Myers-Perry geometries.

\section{The 5D Myers-Perry metric and the Dirac equation}\label{prelims}

In this section, we introduce the 5D Myers-Perry geometry and formulate the Dirac equation on it using the orthonormal pentad formalism.

In Boyer-Lindquist coordinates $(t,r,\theta,\varphi,\psi)$, a 5D Myers-Perry black hole is represented by the manifold
\begin{equation}
\mathcal{M}=\mathbb{R}_t\times (0,\infty)_r \times (0,\frac{\pi}{2})_\theta \times (0,2\pi)_\varphi \times (0,2\pi)_\psi
\end{equation}
equipped with the Lorentzian metric

\begin{equation}\label{metric}
ds^2=-dt^2+\frac{\Sigma r^2}{\Delta}dr^2+\Sigma d\theta^2 + (r^2+a^2)\sin^2\theta d\varphi^2 + (r^2+b^2)\cos^2\theta d\psi^2 + \frac{\mu}{\Sigma}(dt-a\sin^2\theta d\varphi - b\cos^2\theta d\psi)^2,
\end{equation}
where $\mu/2$ is the mass of the black hole, $a$ and $b$ its two independent angular momenta, and 
\begin{equation}
\Delta=(r^2+a^2)(r^2+b^2)-\mu r^2,\qquad \Sigma=r^2+a^2\cos^2\theta+b^2\sin^2\theta.
\end{equation}

Furthermore, the 5D Myers-Perry metric has three commuting Killing vector fields $\partial_t$, $\partial_\varphi$ and $\partial_\psi$. We restrict our attention to the non-extreme case $\mu>a^2+b^2+2\abs{ab}$, for which $\Delta$ has two distinct positive roots

\begin{equation}
r_\pm^2 = \frac{1}{2}\big(\mu -a^2-b^2\pm \sqrt{(\mu-a^2-b^2)^2-4a^2b^2}\big).
\end{equation}

The radii $r_-,r_+$ are the Cauchy and event horizons respectively. We consider only the exterior region $r>r_+$.

We choose the gamma matrices $\gamma^A$, $A=0,1,2,3,5$, as
\begin{equation}
\gamma^0=i\begin{pmatrix}
0 & I \\
I & 0
\end{pmatrix},
\gamma^1=i\begin{pmatrix}
0 & \sigma^3 \\
-\sigma^3 & 0
\end{pmatrix},
\gamma^2=i\begin{pmatrix}
0 & \sigma^1\\
-\sigma^1 & 0
\end{pmatrix},
\gamma^3=i\begin{pmatrix}
0 & \sigma^2\\
-\sigma^2 & 0
\end{pmatrix},
\gamma^5=\begin{pmatrix}
I & 0\\
0 & -I
\end{pmatrix},
\end{equation}

\noindent where the $\sigma^j$ are the Pauli matrices. They satisfy the Clifford algebra anticommutation relations

\begin{equation}
\{\gamma^A,\gamma^B\}=2\eta^{AB},
\end{equation}

\noindent where $\eta^{AB}=\diag\{-1,1,1,1,1\}$ is the five-dimensional Minkowski metric in our chosen signature. We define the matrices $\Gamma^A$ by $\Gamma^0=i\gamma^0$ and $\Gamma^j=-\gamma^0\gamma^j$ for $j\neq 0$; they satisfy the anticommutation relations

\begin{equation}
\{\Gamma^A,\Gamma^B\}=2\delta^{AB},
\end{equation}

\noindent where $\delta^{AB}=\diag\{1,1,1,1,1\}$ is the Kronecker delta.

The massive Dirac equation takes the form

\begin{equation}
(\gamma^A(\partial_A+\Gamma_A)-m)\phi=0,
\end{equation}

\noindent where $\Gamma_A$ are the components of the spinor connection  $\Gamma=\Gamma_Ae^A=\frac{1}{4}\gamma^A\gamma^B\omega_{AB}$ in the orthonormal pentad frame $e^A={e^A}_\mu dx^\mu$, and $\omega_{AB}$ is the connection 1-form in the same frame.

\section{Separation of variables}\label{sepvar}

The Dirac equation on the 5D Myers-Perry metric is put into a separable form $(\mathcal{R}+\mathcal{A})\Psi=0$ using a local Lorentz frame in \cite{Wu_2008,Daud__2012}. In this section, we summarize this procedure, then perform separation of variables using an ansatz (\ref{ansatz}) analogous to the one used in \cite{Finster_2003}. The resulting angular and radial ODEs are similar to the corresponding results in the Kerr geometry \cite{Finster_2003}, but differ slightly due to the additional spatial dimension affecting the form of the horizon-defining function $\Delta$ and the $\theta$-singularities of the angular ODE.

Using the local Lorentz frame $\partial_A={e_A}^\mu\partial_\mu$ given by 

\begin{align}
\begin{split}
\partial_0&=\frac{(r^2+a^2)(r^2+b^2)}{r\sqrt{\Delta\Sigma}}\Big(\partial_t+\frac{a}{r^2+a^2}\partial_\varphi + \frac{b}{r^2+b^2}\partial_\psi\Big),\\
\partial_1&=\sqrt{\frac{\Delta}{r^2\Sigma}}\partial_r,\\
\partial_2&=\frac{1}{\sqrt{\Sigma}}\partial_\theta,\\
\partial_3&=\frac{\sin\theta\cos\theta}{p\sqrt{\Sigma}}\Big((a^2-b^2)\partial_t+\frac{a}{\sin^2\theta}\partial_\varphi-\frac{b}{\cos^2\theta}\partial_\psi\Big),\\
\partial_5&=\frac{1}{rp}\Big(ab\partial_t+b\partial_\varphi+a\partial_\psi\Big),
\end{split}
\end{align}

\noindent where $p^2=a^2\cos^2\theta+b^2\sin^2\theta$, making the substitution 
\begin{equation}
\Psi=\Delta^{1/4}\sqrt{r+ip\gamma^5}\phi = \left(\sqrt{\frac{r+\sqrt{\Sigma}}{2}}I+i\sqrt{\frac{\sqrt{\Sigma}-r}{2}}\gamma^5\right)\phi
\end{equation}
and introducing in the exterior region the radial variable $x$ defined by

\begin{equation}\label{radialvariable}
\frac{dx}{dr}=\frac{(r^2+a^2)(r^2+b^2)}{\Delta},
\end{equation}

\noindent one obtains the Hamiltonian form for the Dirac equation 
\begin{align}
i\partial_t \Psi&=\mathcal{D}\Psi,\\ \mathcal{D}&=N\mathcal{D}_0, 
\end{align}

\noindent where, writing $D_\alpha=-i\partial_\alpha$,

\begin{equation}
N^{-1}=I+\frac{r\sqrt{\Delta}}{(r^2+a^2)(r^2+b^2)}\Big( \frac{(a^2-b^2)}{p}\sin\theta\cos\theta\Gamma^3+\frac{ab}{p}\Gamma^5+\frac{ab}{r}\Gamma^0\Big),
\end{equation}

\begin{equation}
\mathcal{D}_0=\mathbb{D}_0+\frac{\sqrt{\Delta}}{(r^2+a^2)(r^2+b^2)}\left( b\Gamma^0 D_\varphi + a\Gamma^0 D_\psi + \frac{ab}{r}\gamma^1+mpr\Gamma^5\right),
\end{equation}

\begin{equation}
\mathbb{D}_0=\Gamma^1D_x+\frac{r\sqrt{\Delta}}{(r^2+a^2)(r^2+b^2)}\mathbb{D}_{S^3}-m\frac{r^2\sqrt{\Delta}}{(r^2+a^2)(r^2+b^2)}\Gamma^0+\frac{a}{r^2+a^2}D_\varphi + \frac{b}{r^2+b^2}D_\psi,
\end{equation}

\noindent and

\begin{equation}\label{DS3}
\mathbb{D}_{S^3}=i\gamma^0 \gamma^2 \Big( \partial_\theta + \frac{\cot\theta}{2}-\frac{\tan\theta}{2}\Big) + i\gamma^0\gamma^3 \frac{1}{\sin\theta}\partial_\varphi+i\gamma^0\gamma^5\frac{1}{\cos\theta}\partial_\psi.    
\end{equation}

It is shown in \cite[Appendix A]{Daud__2012} that the matrix $N^{-1}$ is positive definite. 

We use a separation ansatz analogous to the one used in \cite{Finster_2003}, but with two spinor angular momentum parameters $k_a,k_b\in\mathbb{N}$; it is of the form

\begin{equation}\label{ansatz}
\Psi(t,x,\theta,\varphi,\psi)=e^{-i\omega t}e^{-i((k_a+\frac{1}{2})\varphi+(k_b+\frac{1}{2})\psi)}\begin{pmatrix}
X_+(x) Y_+(\theta)\\
X_-(x) Y_-(\theta)\\
X_-(x) Y_+(\theta)\\
X_+(x) Y_-(\theta)
\end{pmatrix}.
\end{equation}

The Dirac equation can thus be written in the form

\begin{equation}\label{diracRA}
(\mathcal{R}+\mathcal{A})\Psi=0,
\end{equation}
where
\begin{multline}
\mathcal{R}=\frac{(r^2+a^2)(r^2+b^2)}{r\sqrt{\Delta}}\Gamma^1 D_x -mr\Gamma^0 + \Big( \frac{a(r^2+b^2)}{r\sqrt{\Delta}}+\frac{a}{r}\Gamma^0\Big) D_\varphi \\
+\Big(\frac{b(r^2+a^2)}{r\sqrt{\Delta}}+\frac{b}{r}\Gamma^0\Big) D_\psi + \frac{ab}{r^2}\gamma^1-\omega\Big( \frac{(r^2+a^2)(r^2+b^2)}{r\sqrt{\Delta}}+\frac{ab}{r}\Gamma^0\Big)
\end{multline}
and
\begin{equation}\label{angularoperator}
\mathcal{A}=\mathbb{D}_{S^3}+mp\Gamma^5-\omega\Big(\frac{(a^2-b^2)\sin\theta\cos\theta}{p}\Gamma^3+\frac{ab}{p}\Gamma^5\Big)
\end{equation}
are purely radial and angular operators respectively, with $\mathbb{D}_{S^3}$ as in (\ref{DS3}). 

The eigenvalue relations $D_\varphi\Psi=-(k_a+\frac{1}{2})\Psi$ and $D_\psi\Psi=-(k_b+\frac{1}{2})\Psi$ follow from the ansatz (\ref{ansatz}), under which the Dirac equation (\ref{diracRA}) separates into ODEs $-\mathcal{R}\Psi=\mathcal{A}\Psi=\lambda\gamma^1\Psi$ for a separation constant $\lambda$. In terms of $X=(X_+,X_-)$, the radial ODE is

\begin{equation}\label{radial}
\left[\frac{d}{dx}+i\Omega(x)\begin{pmatrix}
1 & 0\\
0 & -1
\end{pmatrix}
\right] X = \frac{\sqrt{\Delta}}{(r^2+a^2)(r^2+b^2)}\begin{pmatrix}
0 & -ab/r-\lambda r + iB(x)\\
-ab/r-\lambda r - iB(x) & 0
\end{pmatrix}X,
\end{equation}

\noindent where

\begin{align}
\Omega(x)&=\omega + \frac{a(k_a+\frac{1}{2})}{r^2+a^2}+\frac{b(k_b+\frac{1}{2})}{r^2+b^2},\\
B(x)&=mr^2+a\left(k_a+\frac{1}{2}\right)+b\left(k_b+\frac{1}{2}\right)+\omega ab,
\end{align}

\noindent recalling that $r$ is defined as function of $x$ by (\ref{radialvariable}). In terms of $Y=(Y_+,Y_-)$, the angular ODE is

\begin{equation}\label{angular}
AY\equiv
\begin{pmatrix}
C_b(\theta) & L_\theta+C_a(\theta)\\
-L_\theta + C_a(\theta) & -C_b(\theta)
\end{pmatrix}
Y = \lambda Y,
\end{equation}

\noindent where
\begin{align}\label{angularabbrevs}
\begin{split}
L_\theta&=\partial_\theta + \frac{\cot\theta}{2}-\frac{\tan\theta}{2},\\
C_a(\theta)&=-\frac{(k_a+\frac{1}{2})}{\sin\theta}-\frac{\omega(a^2-b^2)\sin\theta\cos\theta}{p},\\
C_b(\theta)&=-\frac{(k_b+\frac{1}{2})}{\cos\theta}+mp-\frac{\omega ab}{p}.
\end{split}
\end{align}

The radial and angular ODEs resemble closely those obtained from the separation of variables for the Dirac equation in the Kerr metric \cite{Finster_2003}, which, in Boyer-Lindquist coordinates $(t,r,\theta,\varphi)$ with a change of radial coordinate $du/dr=(r^2+a^2)/\Delta_\mathrm{Kerr}$, are of the form

\begin{align}\label{kerrradial}
\left[\frac{d}{du}+i\Omega_\mathrm{Kerr}(u)\begin{pmatrix}
1 & 0\\
0 & -1
\end{pmatrix}
\right] X_\mathrm{Kerr} &=\frac{\Delta_\mathrm{Kerr}}{r^2+a^2}\begin{pmatrix}
0 & imr-\lambda\\
-imr-\lambda & 0
\end{pmatrix}X_\mathrm{Kerr},\\
\begin{pmatrix}
-am\cos\theta & \mathcal{L}_-\\
-\mathcal{L}_+ & am\cos\theta
\end{pmatrix}Y_\mathrm{Kerr}&=\lambda Y_\mathrm{Kerr}.\label{kerrangular}
\end{align}

\noindent where $\Delta_\mathrm{Kerr}=r^2-\mu r-a^2$ is the horizon-defining function for the Kerr metric,

\begin{align}
\Omega_\mathrm{Kerr}(u)&=\omega + \frac{a(k+\frac{1}{2})}{r^2+a^2},\\
\mathcal{L}_\pm &= \partial_\theta + \frac{\cot\theta}{2}\mp\left(a\omega\sin\theta+\frac{k+\frac{1}{2}}{\sin\theta}\right),
\end{align}
\noindent and $k\in\mathbb{Z}$ the azimuthal quantum number. In the limiting case of a 5D Myers-Perry black hole with zero angular momentum about one of the two independent axes of rotation, say the $\partial_\psi$ direction so that $b=0$, (\ref{radial}) and (\ref{angular}) simplify to

\begin{multline}\label{b=0radial}
\left[\frac{d}{dx}+i\Omega_\mathrm{Kerr}(x)\begin{pmatrix}
1 & 0\\
0 & -1
\end{pmatrix}
\right] X=\\
\frac{\sqrt{r^2+a^2-\mu}}{r^2+a^2}\begin{pmatrix}
0 & imr-\lambda + i(a/r)(k_a+\frac{1}{2}) \\
-imr-\lambda - i(a/r)(k_a+\frac{1}{2}) & 0
\end{pmatrix}X,
\end{multline}
\begin{equation}\label{b=0angular}
\begin{pmatrix}
am\cos\theta-\frac{k_b+1/2}{\cos\theta} & L_\theta - a\omega\sin\theta - \frac{k_a+1/2}{\sin\theta}\\
-L_\theta - a\omega\sin\theta - \frac{k_a+1/2}{\sin\theta} & -am\cos\theta+\frac{k_b+1/2}{\cos\theta}
\end{pmatrix}Y=\lambda Y.
\end{equation}

The $b=0$ radial equation (\ref{b=0radial}) differs from the Kerr result (\ref{kerrradial}) only by the different $\mu$-dependance and the addition of a $1/r$-damped angular term. Up to some signs, the $b=0$ angular equation (\ref{b=0angular}) differs from (\ref{kerrangular}) only by the addition of $\sim \sec\theta$ terms singular at $\theta=\pi/2$ arising from the Hopf coordinates $(\theta,\varphi,\psi)$ on the $S^3$ geometry of constant $t,r$ hypersurfaces of the 5D Myers-Perry black hole.

\section{Hilbert space structure}\label{Hilbert}
In this section, we construct suitable scalar products $(\cdot|\cdot)$ and $\langle\cdot|\cdot\rangle$ on the Dirac spinors with respect to which $\gamma^1\mathcal{A}$ and $\mathcal{D}$ are respectively essentially self-adjoint, then describe suitable radial boundary conditions to impose to preserve self-adjointness of the Hamiltonian on a radial interval $I\subset \mathbb{R}_x$ bounded on one or both sides.

Let $\Sigma_t$ be a constant-$t$ hypersurface $\mathbb{R}_x \times S^3_{\theta,\varphi,\psi}$. Let $\Psi, \Phi \in L^2(\Sigma_t, d\nu)$, where $d\nu=dxd\Omega$ is the product of Lebesgue measure on $\mathbb{R}_x$ and the usual measure $d\Omega=\sin\theta\cos\theta d\theta d\varphi d\psi$ on $S^3$. Designating by $\overline{\Psi}$ the conjugate transpose spinor (not the adjoint spinor $\Psi^\dagger=\overline{\Psi}\gamma^0$), we define the scalar product

\begin{equation}\label{scalar1}
\big(\Psi|\Phi\big)=\int_{\Sigma_t} \overline{\Psi}\Phi\: d\nu.
\end{equation}

If we define $Z^{k_ak_b\omega\lambda}=\exp\left(-i\left((k_a+\frac{1}{2})\varphi+(k_b+\frac{1}{2})\psi\right)\right)Y^{k_ak_b\omega\lambda}$ and the scalar products

\begin{align}
\big(X^{k_ak_b\omega\lambda}|X^{k_a'k_b'\omega'\lambda'}\big)&=\int_{-\infty}^\infty \overline{X}^{k_ak_b\omega\lambda}(x)X^{k_a'k_b'\omega'\lambda'}(x) \: dx,\label{xscalarproduct}\\
\big(Y^{k_ak_b\omega\lambda}|Y^{k_a'k_b'\omega'\lambda'}\big)&=\int_{S^3} \overline{Z}^{k_ak_b\omega\lambda}Z^{k_a'k_b'\omega'\lambda'}\: d\Omega,
\end{align}

\noindent then (\ref{scalar1}) factors in the sense that

\begin{equation}
\big(\Psi^{k_ak_b\omega\lambda}|\Psi^{k_a'k_b'\omega'\lambda'}\big)=\big(X^{k_ak_b\omega\lambda}|X^{k_a'k_b'\omega'\lambda'}\big)\big(Y^{k_ak_b\omega\lambda}|Y^{k_a'k_b'\omega'\lambda'}\big).
\end{equation}

Let $\mathcal{G}$ be the Hilbert space obtained by completion of the space of spinors of the form (\ref{ansatz}) with respect to the scalar product (\ref{scalar1}). We have that $\mathcal{D}_0$ is essentially self-adjoint on $\mathcal{G}$, but not $\mathcal{D}=N\mathcal{D}_0$. To compensate for the factor $N$, we therefore define the scalar product

\begin{equation}\label{scalar2}
\langle \Psi | \Phi \rangle = \int_{\Sigma_t} \overline{\Psi} N^{-1} \Phi\: d\nu,
\end{equation}

\noindent and let $\mathcal{H}$ be the Hilbert space of spinors in $L^2(\mathbb{R}\times S^3,d\nu)^4$ equipped with scalar product (\ref{scalar2}). In order to construct an integral spectral representation for $e^{-it\mathcal{D}}$, it is important that $\mathcal{D}$ be essentially self-adjoint with respect to (\ref{scalar2}). This is precisely what was shown in Lemma 19 of \cite[Appendix A]{Daud__2012}, namely

\begin{lem}[\cite{Daud__2012}]
The Hamiltonian $\mathcal{D}$ is essentially self-adjoint on $\mathcal{H}$ with domain

\begin{equation}
D(\mathcal{D})=\{\psi\in\mathcal{G}, \lVert \mathcal{D}_0 \psi \rVert^2 <\infty\}.    
\end{equation}
\end{lem}

Note that the space of smooth, compactly supported test spinors $C^\infty_0(\mathbb{R}\times S^3)^4$ is a dense subset of the domain $D(\mathcal{D})$. Another possible method to prove the essential self-adjointness of $\mathcal{D}$ could be to first show its symmetry with respect to (\ref{scalar2}), then to use Chernoff's method \cite{CHERNOFF1973401}, similarly to what was done in \cite{Finster_2016}.

The scalar product (\ref{scalar2}) does not factor into a product. More precisely, we have 

\begin{multline}\label{nonorthogonality}
\langle \Psi | \Psi' \rangle = \big( \Psi | \Psi' \big) - (a^2-b^2)\big(X|\frac{r\sqrt{\Delta}}{(r^2+a^2)(r^2+b^2)}\sigma^2|X'\big)\big(Y|\frac{\sin\theta\cos\theta}{p}\sigma^1|Y'\big)\\
+ab\big(X|\frac{r\sqrt{\Delta}}{(r^2+a^2)(r^2+b^2)}\sigma^2|X'\big)\big(Y|\frac{1}{p}\sigma^3|Y'\big)-ab\big(X|\frac{\sqrt{\Delta}}{(r^2+a^2)(r^2+b^2)}\sigma^2|X'\big)\big(Y|Y'\big).
\end{multline}

The ``non-orthogonality relation'' (\ref{nonorthogonality}) between the two scalar products is more complicated than in the Kerr geometry, but the presence of the $\sqrt{\Delta}$ factor in the non-orthogonal terms implies that they decay exponentially near the event horizon, as in the Kerr case; this behaviour is required to construct the integral representation formula (\ref{Dx2}) for the Dirac propagator.

As in \cite{Finster_2003}, suitably modifying the bounds of the integral (\ref{xscalarproduct}), we may restrict the Hilbert space in the radial direction $\mathbb{R}_x$ to intervals $(-\infty,x_2]$ or $[x_1,x_2]$; we denote the respective restricted Hamiltonians by $\mathcal{D}_{x_2}$ or $\mathcal{D}_{x_1,x_2}$. Self-adjointness of the Hamiltonian is preserved provided we impose at the boundaries $x_j$ the boundary conditions 
\begin{equation}\label{bc}
X_+(x_j)=X_-(x_j).
\end{equation}
%note to self: the spectrum of \gamma^1\mathcal{A} is discrete because it is included in \sigma(\gamma^1)\sigma(\mathcal{A}) and the eigenvalues of \gamma^1 are \pm 1.
\section{Integral spectral representation and decay estimates}\label{fourierintegral}

In this section, we use the radial ODE (\ref{radial}) and the Hilbert space structure obtained above to obtain an integral representation formula for the Dirac propagator and thus conclude $t\to\infty$ decay of the probability of the Dirac particle to be in any compact region of space. As the construction follows the Kerr-Newman case closely, we will only describe the ways in which the proofs differ from those in \cite{Finster_2003}, and omit or provide sketches of the identical arguments.

First, we construct a suitable basis for $\mathcal{H}_{x_1,x_2}$ of eigenvectors of $\mathcal{D}_{x_1,x_2}$. We choose the basis vectors to be eigenvectors of $D_\varphi$ and $D_\Psi$ with eigenvalues $k_a+\frac{1}{2}$ and $k_b+\frac{1}{2}$ respectively. Denote the restriction of $\mathcal{D}_{x_1,x_2}$ to a fixed $k_a,k_b$ subspace by $\mathcal{D}^{k_ak_b}_{x_1,x_2}$, and the restriction of $\mathcal{A}$ to $\mathcal{H}_{x_1x_2}^{k_ak_b}$ by $\mathcal{A}^{k_ak_b}$.  

Then, we choose them to be solutions of the angular ODE $AY=\lambda Y$, or equivalently we require them to be eigensolutions of $\gamma^1\mathcal{A}^{k_ak_b}\Psi=\lambda\Psi$ of the form (\ref{ansatz}). By the nondegeneracy and regularity of the spectrum of $A$ as shown in the Appendix, we may write the angular eigenvalues as smooth functions $\lambda_n(\omega)$ with $\lambda_n<\lambda_{n+1}$ for all $\omega\in\mathbb{R}$. For fixed $k_a, k_b, \omega$, denote by $N(k_a,k_b,\omega)$ the set of $n\in\mathbb{Z}$ for which the radial equation (\ref{radial}) with $\lambda=\lambda_n$ has a solution. Explicitly, our chosen basis is

\begin{equation}\label{chosenbasis}
\{\Psi^{k_ak_b\omega n}_{x_1,x_2}| k_a,k_b\in \mathbb{Z}, \omega\in \sigma(\mathcal{D}^{k_ak_b}_{x_1,x_2}), n\in N(k_a,k_b,\omega)\},
\end{equation}
normalized such that $(X^{k_ak_b\omega n}_{x_1,x_2}|X^{k_ak_b\omega n}_{x_1,x_2})=1$ and $(Y^{k_ak_b\omega n}|Y^{k_ak_b\omega n})=1$.

The following result is analogous to Lemma 3.1 of \cite{Finster_2003} and describes the behaviour of the spinors near the event horizon $r_+$.

\begin{lem}\label{lem1}
Every nontrivial solution $X$ of (\ref{radial}) with boundary conditions (\ref{bc}) is asymptotically as $x\rightarrow -\infty$ of the form

\begin{equation}\label{oscillatory}
X(x)=\begin{pmatrix}
e^{-i\Omega_0x}f_0^+\\
e^{i\Omega_0x}f_0^-
\end{pmatrix}
+R_0(x)
\end{equation}

\noindent with

\begin{align}
|f_0^+|^2+|f_0^-|^2&\neq 0,\\
\Omega_0&=\omega + \frac{a(k_a+\frac{1}{2})}{r_+^2+a^2}+\frac{b(k_b+\frac{1}{2})}{r_+^2+b^2},\\
|R_0|&\leq ce^{dx},
\end{align}
for suitable $c,d>0$ which can be chosen locally uniformly in $\omega$.
\end{lem}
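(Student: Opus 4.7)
The plan is to reduce the radial ODE (\ref{radial}) near the event horizon to a perturbation of a constant-coefficient, purely oscillatory system, and then extract the asymptotic profile and its exponential error bound from a Volterra/Gronwall argument.

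First I would quantify how the coefficients approach their horizon values as $x\to-\infty$. In the non-extreme regime $\mu>a^2+b^2+2\abs{ab}$, $\Delta$ has a simple zero at $r_+$, so integrating (\ref{radialvariable}) near $r_+$ gives $r(x)-r_+\sim e^{\kappa x}$ with surface-gravity rate $\kappa=\Delta'(r_+)/((r_+^2+a^2)(r_+^2+b^2))>0$. A direct expansion then yields $\abs{\Omega(x)-\Omega_0}=O(e^{\kappa x})$ and
\begin{equation}
\frac{\sqrt{\Delta}}{(r^2+a^2)(r^2+b^2)}=O(e^{\kappa x/2}),
\end{equation}
with implicit constants depending continuously on $\omega,k_a,k_b,\lambda$ through $r_+$, $\Omega_0$, and the smooth coefficient $B(x)$.

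Next I would gauge away the oscillatory diagonal part by substituting $X(x)=\diag(e^{-i\Omega_0 x},e^{i\Omega_0 x})F(x)$. The ODE (\ref{radial}) then becomes $F'(x)=V(x)F(x)$, where $V(x)$ combines the correction $-i(\Omega(x)-\Omega_0)\diag(1,-1)$ with the off-diagonal matrix of (\ref{radial}) conjugated by the same diagonal factor; the off-diagonal entries thus pick up bounded phases $e^{\pm 2i\Omega_0 x}$ but retain the $\sqrt{\Delta}$ prefactor. By the previous step $\|V(x)\|\leq C e^{dx}$ with $d=\kappa/2$ and $C$ locally uniform in $\omega$, so $V\in L^1((-\infty,x_0])$ for every $x_0$.

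The lemma then follows from the Volterra equation
\begin{equation}
F(x)=f_\infty+\int_{-\infty}^{x}V(s)F(s)\,ds
\end{equation}
on $(-\infty,x_0]$. For any solution $X$ of (\ref{radial}) — in particular the one determined by the boundary condition (\ref{bc}) and propagated by ODE continuation from $x_j$ — a Gronwall estimate applied to $F'=VF$ yields $F\in L^\infty((-\infty,x_0])$, and the integrability of $\|V\|$ forces $F(x)$ to have a finite limit $f_\infty=(f_0^+,f_0^-)^T$ as $x\to-\infty$, with
\begin{equation}
\abs{F(x)-f_\infty}=\Big|\int_{-\infty}^{x}V(s)F(s)\,ds\Big|\leq c\,e^{dx}.
\end{equation}
Setting $R_0(x):=X(x)-(e^{-i\Omega_0 x}f_0^+,\,e^{i\Omega_0 x}f_0^-)^T$ gives exactly (\ref{oscillatory}). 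Nontriviality of $X$ forces $f_\infty\neq 0$: if $f_\infty=0$, Picard iteration on the Volterra equation yields $F\equiv 0$ on $(-\infty,x_0]$, hence $X\equiv 0$ by ODE uniqueness.

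The main technical point — and the only place where care is needed beyond the Kerr-Newman case of \cite[Lemma~3.1]{Finster_2003} — is the local uniformity in $\omega$ of the constants $c,d$. This reduces to noting that $\kappa$ is $\omega$-independent and $\Omega_0$, $B$, and the other entries of the horizon off-diagonal matrix are explicit smooth functions of $\omega$; all bounds can then be chosen uniformly over any compact $\omega$-interval. The extra $1/r$ and $ab$ terms present in the 5D radial equation compared to the Kerr case contribute only bounded smooth perturbations and do not affect the horizon asymptotics.
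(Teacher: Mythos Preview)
Your proposal is correct and follows essentially the same approach as the paper: the substitution $X(x)=\diag(e^{-i\Omega_0 x},e^{i\Omega_0 x})F(x)$ is exactly the ansatz used there, and the resulting equation $F'=VF$ with exponentially decaying $V$ is the paper's equation (\ref{fode}). The paper simply observes that the right-hand side of (\ref{fode}) decays exponentially and then defers the remainder of the argument to \cite[Lemma~3.1]{Finster_2003}, whereas you spell out the surface-gravity rate $\kappa$, the Volterra/Gronwall step, the nontriviality argument, and the local uniformity in $\omega$ explicitly; these are precisely the details that the cited reference supplies.
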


\begin{proof}
Writing $f=(f^+,f^-)$ and making the ansatz

\begin{equation}
X(x)=\begin{pmatrix}
e^{-i\Omega_0x}f^+(x)\\
e^{i\Omega_0x}f^-(x)
\end{pmatrix},
\end{equation}

\noindent one obtains

\begin{multline}\label{fode}
f'=\biggl( i(\Omega_0-\Omega(x))\begin{pmatrix}
1 & 0\\
0 & -1
\end{pmatrix}
+\\
\frac{\sqrt{\Delta}}{(r^2+a^2)(r^2+b^2)}\begin{pmatrix}
0 & (-ab/r-\lambda r + iB(x))e^{2i\Omega_0x}\\
(-ab/r-\lambda r - iB(x))e^{-2i\Omega_0x} & 0
\end{pmatrix}
\biggr) f.
\end{multline}

The prefactor of the right-hand side of (\ref{fode}) tends exponentially to zero as $x\rightarrow -\infty$, so the result follows by the same argument as in \cite{Finster_2003}.
\end{proof}

We thus make the normalization 
\begin{equation}\label{Xx2normalization}
\lim_{x\to-\infty} |X_{x_2}|=1.
\end{equation}
Then we have, analogously to Lemma 3.2 of \cite{Finster_2003},

\begin{lem}\label{lem2}
For fixed $x_2$ and asymptotically as $x_1\rightarrow-\infty$, 

\begin{equation}\label{grel}
X_{x_1,x_2}=g(x_1)X_{x_2}|_{[x_1,x_2]}
\end{equation}
with
\begin{equation}\label{gdef}
|g(x_1)|^{-2}=(x_2-x_1)+\mathcal{O}(1).
\end{equation}

Furthermore, 
\begin{equation}\label{orth}
\abs{\langle \Psi^{k_ak_b\omega n}_{x_1,x_2}|\Psi^{k_ak_b\omega n'}_{x_1,x_2}\rangle - \delta^{nn'}}\leq \frac{c}{x_2-x_1}\sum_{l=1}^3\abs{\langle Y^{k_ak_b\omega n}|J_l|Y^{k_ak_b\omega n'}\rangle},
\end{equation}

\noindent where

\begin{equation}\label{Jj}
J_1=\frac{\sin\theta\cos\theta}{p}(b^2-a^2)\sigma^1,\qquad J_2=\frac{ab}{p}\sigma^3,\qquad J_3=-\frac{ab}{r_+}I.
\end{equation}
\end{lem}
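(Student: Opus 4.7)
The plan is to first establish the proportionality (\ref{grel})--(\ref{gdef}) and then, using the non-orthogonality identity (\ref{nonorthogonality}), to deduce the almost-orthogonality bound (\ref{orth}). For the proportionality, observe that both $X_{x_1,x_2}$ and the restriction of $X_{x_2}$ to $[x_1,x_2]$ solve the radial ODE (\ref{radial}) on $[x_1,x_2]$ and satisfy the scalar boundary condition $X_+(x_2)=X_-(x_2)$ at $x=x_2$. Since (\ref{radial}) is a two-dimensional first-order linear system, the solution space compatible with this single scalar boundary condition is one-dimensional, so $X_{x_1,x_2}=g(x_1)X_{x_2}|_{[x_1,x_2]}$ for some scalar $g(x_1)$. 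Inserting this into the normalization $(X_{x_1,x_2}|X_{x_1,x_2})=1$ gives $|g(x_1)|^2\int_{x_1}^{x_2}|X_{x_2}|^2\,dx=1$; by Lemma \ref{lem1} the integrand tends to $1$ exponentially fast as $x\to-\infty$, so the integral equals $(x_2-x_1)+\mathcal{O}(1)$ and (\ref{gdef}) follows.

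For the orthogonality estimate, take $\Psi=\Psi^{k_ak_b\omega n}_{x_1,x_2}$ and $\Psi'=\Psi^{k_ak_b\omega n'}_{x_1,x_2}$ in (\ref{nonorthogonality}). The factored term $(\Psi|\Psi')=(X^n|X^{n'})(Y^n|Y^{n'})$ equals $\delta^{nn'}$: the $Y^n$ are orthonormal eigenvectors of the self-adjoint operator $\gamma^1\mathcal{A}^{k_ak_b}$ with the distinct angular eigenvalues $\lambda_n(\omega)$ produced in the Appendix, and $(X^n|X^n)=1$ by our chosen normalization. Therefore $\langle\Psi^n|\Psi^{n'}\rangle-\delta^{nn'}$ reduces exactly to the sum of the three correction terms on the right-hand side of (\ref{nonorthogonality}).

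Each correction is a product of an $X$-bilinear form carrying the weight $\tfrac{\sqrt{\Delta}}{(r^2+a^2)(r^2+b^2)}$ (multiplied by $r$ in the first two cases) and a $Y$-bilinear form. Substituting (\ref{grel}), the $X$-bilinear form becomes $|g(x_1)|^2$ times an integral whose weight decays exponentially as $x\to-\infty$ (because $\Delta\to 0$ at the event horizon) against the bounded integrand $\overline{X_{x_2}^n}\sigma^jX_{x_2}^{n'}$ (bounded by Lemma \ref{lem1}); this integral is therefore $\mathcal{O}(1)$ uniformly in $x_1$, and combined with $|g(x_1)|^2=\mathcal{O}((x_2-x_1)^{-1})$ we pick up the $\tfrac{c}{x_2-x_1}$ prefactor. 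The first two corrections produce exactly $|\langle Y^n|J_1|Y^{n'}\rangle|$ and $|\langle Y^n|J_2|Y^{n'}\rangle|$. In the third, the $X$-weight lacks a factor of $r$; splitting $\tfrac{1}{r}=\tfrac{1}{r_+}+(\tfrac{1}{r}-\tfrac{1}{r_+})$ separates the term into a piece giving $\langle Y^n|J_3|Y^{n'}\rangle$ exactly and a residual piece in which the extra factor $\tfrac{1}{r}-\tfrac{1}{r_+}$ vanishes at the horizon; the residual integral remains $\mathcal{O}(1)$ uniformly in $x_1$ and is absorbed into the same $\tfrac{c}{x_2-x_1}$ bound.

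The main obstacle is the bookkeeping to ensure that the implicit constants are locally uniform in $\omega$: both the exponential rate of $R_0$ in Lemma \ref{lem1} and the horizon asymptotics used in the $\tfrac{1}{r}-\tfrac{1}{r_+}$ splitting enter these constants, and this uniformity is essential when integrating over $\omega$ to assemble the spectral representation in the next section.
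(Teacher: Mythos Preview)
Your proposal is correct and follows essentially the same route as the paper: both obtain (\ref{grel})--(\ref{gdef}) from the one-dimensionality of the solution space compatible with the single boundary condition at $x_2$ together with the horizon asymptotics of Lemma~\ref{lem1}, and both derive (\ref{orth}) by inserting $(\Psi|\Psi')=\delta^{nn'}$ into (\ref{nonorthogonality}) and bounding the three $X$-integrals via (\ref{grel}). The one spot where you work harder than needed is the third correction term: rather than splitting $\tfrac{1}{r}=\tfrac{1}{r_+}+\bigl(\tfrac{1}{r}-\tfrac{1}{r_+}\bigr)$ and carrying a residual, the paper simply uses $r>r_+$ in the exterior to bound
\[
\int_{x_1}^{x_2}\frac{\sqrt{\Delta}}{(r^2+a^2)(r^2+b^2)}\,|X|\,|X'|\,dx
\;\le\;\frac{1}{r_+}\int_{x_1}^{x_2}\frac{r\sqrt{\Delta}}{(r^2+a^2)(r^2+b^2)}\,|X|\,|X'|\,dx,
\]
reducing the third term to the same $X$-integral already controlled in the first two and producing $J_3=-\tfrac{ab}{r_+}I$ in one step.
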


\begin{proof}
The proof of (\ref{grel}) and (\ref{gdef}) is identical to that in \cite{Finster_2003}. The estimate (\ref{orth}) and (\ref{Jj}) follow from the proof in \cite{Finster_2003}, the triangle inequality, and the fact that $r>r_+$ in the region of interest, which implies that

\begin{equation}
\int_{x_1}^{x_2}\frac{r\sqrt{\Delta}}{(r^2+a^2)(r^2+b^2)}\abs{X}\abs{X'}\: dx\leq \frac{1}{r_+}\int_{x_1}^{x_2}\frac{\sqrt{\Delta}}{(r^2+a^2)(r^2+b^2)}\abs{X}\abs{X'}\: dx.
\end{equation}
\end{proof}

\begin{lem}\label{gapestimate}
For fixed $x_2$ and asymptotically as $x_1\rightarrow -\infty$, 

\begin{equation}
\Delta\omega_{k_ak_bn}=\frac{\pi}{x_2-x_1} + \mathcal{O}((x_2-x_1)^{-2}),
\end{equation}
locally uniformly in $\omega$.
\end{lem}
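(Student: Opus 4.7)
The plan is to express the eigenvalue condition at $x_1$ as a phase-matching equation that is monotone in $\omega$, and then to count eigenvalues using the mean value theorem. To set this up, first observe that the radial ODE (\ref{radial}) preserves the indefinite Hermitian form $|X_+|^2 - |X_-|^2$: the off-diagonal coupling matrix on the right-hand side of (\ref{radial}) is Hermitian (its $(1,2)$ and $(2,1)$ entries are complex conjugates), so a direct computation gives $\frac{d}{dx}(X^\dagger \sigma^3 X) = 0$ on any solution. The boundary condition (\ref{bc}) at $x_2$ forces this conserved quantity to vanish, so that $|X_{x_2,+}(x_1)| = |X_{x_2,-}(x_1)|$ for every $\omega$. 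Consequently the boundary condition at $x_1$ collapses to the single real phase equation $\arg(X_{x_2,+}(x_1)) \equiv \arg(X_{x_2,-}(x_1)) \pmod{2\pi}$.

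Substituting the asymptotic expansion from Lemma \ref{lem1}, this eigenvalue condition becomes
$$F(\omega;x_1) := -2\Omega_0(\omega)\,x_1 + \Phi(\omega) = 2\pi n + O(e^{dx_1}), \qquad n\in\mathbb{Z},$$
where $\Phi(\omega) := \arg(f_0^+(\omega)/f_0^-(\omega))$. Using $d\Omega_0/d\omega = 1$, differentiation gives $\partial_\omega F(\omega;x_1) = -2x_1 + \Phi'(\omega) + O(e^{dx_1})$. Granting for the moment that $\Phi'(\omega) = O(1)$ locally uniformly, $F$ is strictly monotone in $\omega$ for $|x_1|$ large, so consecutive roots are indexed by consecutive $n$, and the mean value theorem yields
$$\Delta\omega_{k_ak_bn} = \frac{2\pi + O(e^{dx_1})}{|\partial_\omega F|} = \frac{\pi}{|x_1|} + O(|x_1|^{-2}).$$
Since $|x_1| = (x_2-x_1) - x_2$, a Taylor expansion of $1/|x_1|$ about $(x_2-x_1)^{-1}$ converts this to the stated $\pi/(x_2-x_1) + O((x_2-x_1)^{-2})$.

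The main technical obstacle is the locally uniform bound $\Phi'(\omega) = O(1)$, equivalently $\partial_\omega f_0^\pm = O(1)$. This is not a formal consequence of Lemma \ref{lem1}, which on its face only guarantees smooth dependence of $X_{x_2}$ on $\omega$ on compact $x$-sets. I would obtain the required bound by differentiating (\ref{fode}) in $\omega$: the coefficient matrix on the right-hand side of (\ref{fode}) and its $\omega$-derivative both decay exponentially as $x\to-\infty$ and depend smoothly on $\omega$, so a Grönwall estimate applied to $\partial_\omega f$ yields bounds locally uniform in $\omega$ that remain finite as $x\to-\infty$. With this bound in hand, the remainder of the argument runs parallel to the Kerr-Newman case treated in \cite{Finster_2003}.
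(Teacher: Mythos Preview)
Your proposal is correct and follows essentially the same route as the paper's proof: both establish $|X_+|=|X_-|$ from the Hermiticity of the coupling matrix, reduce to bounding $\partial_\omega f$ by differentiating (\ref{fode}) in $\omega$ and applying a Gr\"onwall estimate, and then defer to the phase-matching/mean-value argument of \cite{Finster_2003}. The paper is slightly more explicit on one point you gloss over---differentiating the phases $e^{\pm 2i\Omega_0 x}$ in $\omega$ produces an extra factor of $|x|$ in $\partial_\omega G$, so one must absorb $|x|e^{d_1 x}$ into a slower exponential $e^{d_1 x/2}$---but this does not affect your conclusion.
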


\begin{proof}
The matrix on the right-hand side of (\ref{radial}) is Hermitian, and therefore $\abs{X_+}=\abs{X_-}$ for all $x\leq x_2$, as in \cite{Finster_2003}. Writing (\ref{fode}) as $\frac{df}{dx}=Gf$, we have

\begin{equation}
\abs{\frac{d}{dx}\partial_\omega f}\leq \abs{(\partial_\omega G)f}+\abs{G(\partial_\omega f)}.
\end{equation}
Noting that $\partial_\omega B(x) = ab$ and $\partial_\omega \Omega(x)=\partial_\omega\Omega_0=1$, we obtain
\begin{equation}
\partial_\omega G = \frac{\sqrt{\Delta}}{(r^2+a^2)(r^2+b^2)}\begin{pmatrix}
0 & G_+ e^{2i\Omega_0x}\\
G_- e^{-2i\Omega_0x} & 0
\end{pmatrix},
\end{equation}
where $G_\pm=-2xB(x)\mp 2ix(ab/r+\lambda r - ab/(2x))$. Therefore, there are $c_1, d_1>0$ such that $\abs{\partial_\omega G}\leq c_1\abs{x}e^{d_1x}$ for all $x\leq x_2$, which in turn implies that there is a $c_2>0$ such that $\abs{\partial_\omega G}\leq c_2 e^{d_1x/2}$. Using the fact that there are $c_3, d_3>0$ such that $\abs{G}\leq c_3 e^{d_3x}$, we conclude, choosing $d=\min\{d_3, d_1/2\}$ for instance, that there exist $c_5,c_6>0$ such that

\begin{equation}
\abs{\frac{d}{dx}\partial_\omega f}\leq c_5 e^{dx} \abs{\partial_\omega f} + c_6 e^{dx} \abs{f}.
\end{equation}
The rest of the proof follows \cite{Finster_2003} exactly.
\end{proof}

The integral representation formula for $\mathcal{D}_{x_2}$ follows from the above estimates, noting that the bound in (\ref{orth}) is uniform in $k_a,k_b$ and $\omega$. The proof follows \cite{Finster_2003} closely.

\begin{pro}
For every $\Psi\in C^\infty_0((-\infty, x_2]\times S^3)^4$ and $y=(x,\theta,\varphi,\psi)$,

\begin{equation}\label{Dx2}
\left( e^{-it\mathcal{D}_{x_2}}\Psi\right)(y)=\frac{1}{\pi}\sum_{k_a,k_b\in \mathbb{Z}} \int_{-\infty}^\infty d\omega\: e^{-i\omega t} \sum_{n\in\mathbb{Z}} \Psi^{k_ak_b\omega n}_{x_2}(y)\langle \Psi^{k_ak_b\omega n}_{x_2}|\Psi\rangle.
\end{equation}
\end{pro}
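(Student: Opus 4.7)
The plan is to mimic the argument of Theorem 4.1 in \cite{Finster_2003}: first write down the discrete spectral decomposition of $e^{-it\mathcal{D}_{x_1,x_2}}$ on the compact radial interval $[x_1,x_2]$, where the spectrum of $\mathcal{D}^{k_ak_b}_{x_1,x_2}$ is pure point, and then pass to the limit $x_1\to-\infty$ so that the discrete sum over $\omega\in\sigma(\mathcal{D}^{k_ak_b}_{x_1,x_2})$ becomes a Riemann sum converging to the frequency integral in (\ref{Dx2}).

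Concretely, I would first observe that for each fixed $k_a,k_b$, the restricted Hamiltonian $\mathcal{D}^{k_ak_b}_{x_1,x_2}$ is self-adjoint under the boundary conditions (\ref{bc}), with discrete spectrum. Lemma \ref{lem2} shows that the family $\{\Psi^{k_ak_b\omega n}_{x_1,x_2}\}$ is \emph{approximately} orthonormal in $\mathcal{H}_{x_1,x_2}$, the error term (\ref{orth}) being controlled uniformly in $(k_a,k_b,\omega)$. This yields an identity of the form
\begin{equation*}
\big(e^{-it\mathcal{D}_{x_1,x_2}}\Psi\big)(y) = \sum_{k_a,k_b}\sum_{\omega\in\sigma(\mathcal{D}^{k_ak_b}_{x_1,x_2})}\sum_{n} e^{-i\omega t}\,\Psi^{k_ak_b\omega n}_{x_1,x_2}(y)\,\langle\Psi^{k_ak_b\omega n}_{x_1,x_2}|\Psi\rangle + E_{x_1}(\Psi),
\end{equation*}
with an error term $E_{x_1}$ bounded in $\mathcal{H}$-norm by (\ref{orth}) times a Sobolev-type seminorm of $\Psi$, and in particular vanishing as $x_1\to-\infty$.

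Next, since $\Psi\in C^\infty_0((-\infty,x_2]\times S^3)^4$ is supported away from $x_1$ for $|x_1|$ sufficiently large, relation (\ref{grel}) gives
\begin{equation*}
\Psi^{k_ak_b\omega n}_{x_1,x_2}(y)\,\langle\Psi^{k_ak_b\omega n}_{x_1,x_2}|\Psi\rangle = |g(x_1)|^2\,\Psi^{k_ak_b\omega n}_{x_2}(y)\,\langle\Psi^{k_ak_b\omega n}|\Psi\rangle.
\end{equation*}
The sum over $\omega\in\sigma(\mathcal{D}^{k_ak_b}_{x_1,x_2})$ is then a Riemann sum: combining $|g(x_1)|^{-2}=(x_2-x_1)+\mathcal{O}(1)$ from (\ref{gdef}) with the eigenvalue spacing $\Delta\omega_{k_ak_bn}=\pi/(x_2-x_1)+\mathcal{O}((x_2-x_1)^{-2})$, one finds that $|g(x_1)|^2/\Delta\omega_{k_ak_bn}\to 1/\pi$ locally uniformly in $\omega$. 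Writing $|g|^2 = (|g|^2/\Delta\omega_{k_ak_bn})\cdot\Delta\omega_{k_ak_bn}$, the Riemann sum therefore converges as $x_1\to-\infty$ to $\frac{1}{\pi}\int_{-\infty}^\infty d\omega$, producing the factor of $1/\pi$ in (\ref{Dx2}).

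The main obstacle will be justifying the exchange of limits. On the one hand, one needs the strong convergence $e^{-it\mathcal{D}_{x_1,x_2}}\Psi\to e^{-it\mathcal{D}_{x_2}}\Psi$, which follows from strong resolvent convergence of $\mathcal{D}^{k_ak_b}_{x_1,x_2}$ to $\mathcal{D}^{k_ak_b}_{x_2}$ (valid since $\Psi$ has compact $x$-support well inside $(-\infty,x_2]$). On the other hand, one must commute the $x_1\to-\infty$ limit with the sums over $(k_a,k_b,n)$ and with the nascent frequency integral; this requires uniformity in $(k_a,k_b,\omega)$ of both the near-orthogonality error (\ref{orth}) and of the Riemann-sum density $|g(x_1)|^2/\Delta\omega_{k_ak_bn}$. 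Once these uniform estimates are in place, the argument goes through exactly as in \cite{Finster_2003}, and (\ref{Dx2}) follows.
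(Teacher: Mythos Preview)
Your proposal is correct and follows essentially the same approach as the paper: the paper does not give an independent proof but simply remarks that the integral representation follows from Lemmas~\ref{lem1}--3 exactly as in \cite{Finster_2003}, noting that the bound in (\ref{orth}) is uniform in $k_a,k_b$ and $\omega$. You have correctly identified all the ingredients---the discrete spectral decomposition on $[x_1,x_2]$, the rescaling (\ref{grel})--(\ref{gdef}), the eigenvalue spacing, and the uniformity needed to pass to the Riemann integral---and your outline matches the Finster--Kamran--Smoller--Yau argument that the paper invokes.
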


\begin{proof}
The Dirac propagator $\exp(-it\mathcal{D}_{x_1,x_2})$ can be expressed in terms of the chosen eigenbasis (\ref{chosenbasis}) as

\begin{equation}
e^{-it\mathcal{D}_{x_1,x_2}}\Psi = \sum_{k_a,k_b\in\mathbb{Z}}\sum_{\omega\in\sigma(\mathcal{D}^{k_ak_b}_{x_1,x_2})}e^{-i\omega t}\sum_{n,n'\in N(k_a,k_b,\omega)}c_{nn'}\Psi^{k_ak_b\omega n}_{x_1,x_2} \langle \Psi^{k_ak_b\omega n'}_{x_1,x_2}|\Psi\rangle,
\end{equation}

\noindent for some constants $c_{nn'}$ arising due to the fact that eigenvectors $\Psi^{k_ak_b\omega n}_{x_1,x_2}$ with different values of $n$ are not orthogonal. Nonetheless, as $x_1\rightarrow -\infty$, we have that $c_{nn'}\rightarrow \delta_{nn'}$ by (\ref{orth}). Expressing $\Psi_{x_1,x_2}$ in terms of $\Psi_{x_2}$ using (\ref{grel}) and (\ref{gdef}), we obtain

\begin{multline}\label{spectralsumProp1}
e^{-it\mathcal{D}_{x_1,x_2}}\Psi=\sum_{k_a,k_b\in\mathbb{Z}} \frac{1}{x_2-x_1} \sum_{\omega\in\sigma(\mathcal{D}^{k_ak_b}_{x_1,x_2})} e^{-i\omega t} \\
\times \left(\sum_{n\in N(k_a,k_b,\omega)} \Psi^{k_ak_b\omega n}_{x_2}\langle \Psi^{k_ak_b\omega n}_{x_2}|\Psi\rangle_{x_1,x_2} + \mathcal{O}((x_2-x_1)^{-1})\right),
\end{multline}
\noindent where the subscript on $\langle \cdot | \cdot \rangle$ indicates restriction of the inner product to the radial interval $[x_1,x_2]$, as explained at the end of Section \ref{Hilbert}. As $x_1\rightarrow -\infty$, Lemma \ref{gapestimate} implies that (\ref{spectralsumProp1}) converges to the Riemann integral (\ref{Dx2}).
\end{proof}

In the large $x$ regime, or equivalent the large $r$ regime since $\lim_{x\to \infty} x/r = 1$, the radial ODE (\ref{radial}) takes the asymptotic form

\begin{equation}\label{largex}
\frac{d}{dx}X=\left(\begin{pmatrix}
-i\omega & im \\
-im & i\omega
\end{pmatrix}
+\frac{1}{x}\begin{pmatrix}
0 & -\lambda \\
-\lambda & 0
\end{pmatrix}
\right) X + \mathcal{O}(x^{-2})X.
\end{equation}

We remark that (\ref{largex}) is in fact strictly simpler than the analogous equation in the Kerr geometry, namely

\begin{equation}\label{largeradiuskerr}
\frac{d}{du}X_\mathrm{Kerr}=\left(\begin{pmatrix}
-i\omega & im \\
-im & i\omega
\end{pmatrix}
+\frac{1}{u}\begin{pmatrix}
0 & -im\mu/2-\lambda \\
im\mu/2-\lambda & 0
\end{pmatrix}
\right) X_\mathrm{Kerr} + \mathcal{O}(u^{-2})X_\mathrm{Kerr}.
\end{equation}

Unlike in the Kerr geometry, the asymptotic behaviour of Dirac spinors far away from the 5D Myers-Perry black hole does not depend on the mass $\mu/2$ of the black hole. This is because the $\mu$-dependent term $\mp im\mu/2$ in (\ref{largeradiuskerr}) arises from the limit

\begin{equation}
\pm im\lim_{r\to\infty}(\sqrt{\Delta_\mathrm{Kerr}}-r)=\mp im\mu/2,
\end{equation}

\noindent while in the 5D Myers-Perry case, the relevant limit is

\begin{equation}
\pm im \lim_{r\to\infty}\left(\frac{\sqrt{\Delta}}{r}-r\right)=0.
\end{equation}

This is ultimately because the $\mu$-dependent term of $\Delta(r)$ is two orders in $r$ below the leading term $r^4$, while in $\Delta_\mathrm{Kerr}$ it is one order below the leading term $r^2$.

As in \cite{Finster_2003}, for $\abs{\omega}<\abs{m}$ we have exponentially growing/decaying fundamental solutions $\Psi_{1/2}$ which we normalize by 

\begin{equation}
\lim_{x\to\infty} \abs{\Psi_{1/2}(x)}=1.
\end{equation}

For $\abs{\omega}>\abs{m}$, normalize the two oscillating fundamental solutions $\Psi_{1/2}$ by

\begin{equation}\label{phaseconditions}
f_{0,1}=\begin{pmatrix}
1\\
0
\end{pmatrix},\qquad
f_{0,2}=\begin{pmatrix}
0\\
1
\end{pmatrix}
\end{equation}
with $f_{0,1/2}$ as in (\ref{oscillatory}). We have, simplified from \cite{Finster_2003},

\begin{lem}
Every nontrivial solution $X$ of (\ref{radial}) has for large $x$ the asymptotic form

\begin{equation}\label{finfty}
X(x)=A\begin{pmatrix}
e^{-i\Phi(x)}f_\infty^+\\
e^{i\Phi(x)}f_\infty^-
\end{pmatrix}
+R_\infty(x),
\end{equation}
with, for some constant $C>0$,
\begin{align}
\abs{f_\infty^+}^2+\abs{f_\infty^-}^2&\neq 0,\\
\Phi(x)&=\mathrm{sgn}(\omega)\sqrt{\omega^2-m^2}\:x,\\
A&=\begin{pmatrix}
\cosh\Theta & \sinh\Theta\\
\sinh\Theta & \cosh\Theta
\end{pmatrix}, \qquad \Theta=\frac{1}{4}\log\left(\frac{\omega+m}{\omega-m}\right),\\
\abs{R_\infty}&\leq \frac{C}{x}.
\end{align}
\end{lem}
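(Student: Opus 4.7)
The plan is a Levinson-type asymptotic analysis: after diagonalising the leading constant matrix in (\ref{largex}) and factoring out the linear phase, the remaining perturbation carries an $\mathcal{O}(1/x)$ amplitude with oscillatory factors, and one integration by parts converts it into a Volterra equation with an $\mathcal{O}(1/x^2)$ effective kernel that is integrable at infinity.

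First I would verify by direct computation that $A^{-1}MA = \diag(-i\Phi', i\Phi')$, where $M = \begin{pmatrix} -i\omega & im \\ -im & i\omega \end{pmatrix}$ is the constant part of (\ref{largex}) and $\Phi'(x) = \mathrm{sgn}(\omega)\sqrt{\omega^2 - m^2}$; the computation reduces to the identity $\omega\cosh(2\Theta) - m\sinh(2\Theta) = \mathrm{sgn}(\omega)\sqrt{\omega^2-m^2}$, which follows from the definition $\Theta = \tfrac{1}{4}\log\left(\tfrac{\omega+m}{\omega-m}\right)$. A similar short calculation shows that $A^{-1}\begin{pmatrix} 0 & -\lambda \\ -\lambda & 0 \end{pmatrix}A = \begin{pmatrix} 0 & -\lambda \\ -\lambda & 0 \end{pmatrix}$, so the substitution $X = Af$ turns (\ref{largex}) into
\begin{equation*}
f' = \begin{pmatrix} -i\Phi' & 0 \\ 0 & i\Phi' \end{pmatrix} f + \frac{1}{x}\begin{pmatrix} 0 & -\lambda \\ -\lambda & 0 \end{pmatrix} f + \mathcal{O}(x^{-2}) f.
\end{equation*}

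Next I would absorb the linear phase by setting $g(x) := \diag(e^{i\Phi(x)}, e^{-i\Phi(x)}) f(x)$, producing
\begin{equation*}
g'(x) = \frac{1}{x}\begin{pmatrix} 0 & -\lambda e^{2i\Phi(x)} \\ -\lambda e^{-2i\Phi(x)} & 0 \end{pmatrix} g + \mathcal{O}(x^{-2}) g,
\end{equation*}
so that the leading diagonal cancels exactly and every $\mathcal{O}(1/x)$ entry carries a nonstationary oscillatory factor (in the regime $\abs{\omega} > \abs{m}$; for $\abs{\omega} < \abs{m}$ the factors become real exponentials, and the analogous argument identifies the growing and decaying subspaces). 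One integration by parts yields $\int_x^\infty s^{-1} e^{\pm 2i\Phi(s)}\, ds = \mathcal{O}(1/x)$. Writing the ODE as the Volterra equation $g(x) = g_\infty - \int_x^\infty K(s) g(s)\, ds$ and integrating by parts once against a primitive of $K$ reduces the effective kernel to order $\mathcal{O}(s^{-2})$, which is integrable at infinity; a standard contraction-mapping/Gronwall argument then produces a limit $g_\infty = (f_\infty^+, f_\infty^-)^T$ with $\abs{g(x) - g_\infty} \leq C/x$. Reassembling $X = A \cdot \diag(e^{-i\Phi}, e^{i\Phi}) g$ yields the announced form (\ref{finfty}), with $R_\infty = A \cdot \diag(e^{-i\Phi}, e^{i\Phi})(g - g_\infty)$ bounded by $C/x$.

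The nontriviality $\abs{f_\infty^+}^2 + \abs{f_\infty^-}^2 \neq 0$ is immediate: $g_\infty = 0$ would force $g(x) \to 0$ and hence $f(x) \to 0$, but invertibility of the fundamental matrix of (\ref{radial}) on any compact subinterval of $(r_+, \infty)$ then propagates this to $X \equiv 0$, contradicting nontriviality. The main obstacle I expect is executing the Levinson iteration with enough uniformity to secure the $C/x$ bound with $C$ locally uniform in $\omega$, and in particular treating the threshold $\abs{\omega} = \abs{m}$ where $A$ becomes singular: the oscillatory and the exponential regimes must be analysed separately, with careful book-keeping of the subdominant exponential mode in the latter. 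The remaining computations are essentially mechanical verifications paralleling those in the Kerr case of \cite{Finster_2003}.
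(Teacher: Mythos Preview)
The paper does not supply its own proof of this lemma: it simply states the result as ``simplified from \cite{Finster_2003}'' after deriving the large-$x$ form (\ref{largex}) of the radial equation. Your Levinson-type sketch is correct and is precisely the kind of argument that underlies the cited reference: diagonalising the constant leading matrix $M$ by $A$, peeling off the linear phase $e^{\mp i\Phi}$, and then exploiting the oscillation in the residual $x^{-1}$ off-diagonal term (via one integration by parts against $e^{\pm 2i\Phi(x)}$, whose phase has nonvanishing derivative when $\abs{\omega}>\abs{m}$) to reduce to an integrable $\mathcal{O}(x^{-2})$ Volterra kernel. Your verification that $A$ commutes with $\sigma^1$ so that the $-\lambda\sigma^1/x$ term is unchanged under conjugation is correct, as is the eigenvalue computation for $M$; the nontriviality argument via uniqueness for the linear ODE is also fine. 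In short, you have reconstructed the omitted proof along the expected lines, and there is nothing substantive to compare against in the paper itself.

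One minor remark: the lemma as stated (with real $\Theta$ and real $\Phi$) is implicitly restricted to $\abs{\omega}>\abs{m}$, which is the regime treated just before the lemma in the paper; your parenthetical comment about the $\abs{\omega}<\abs{m}$ regime is appropriate but not strictly needed for this statement, since that case was already dispatched separately in the preceding paragraph of the paper.
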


Define the coefficients $f_\infty=(f_\infty^+,f_\infty^-)$ corresponding to the fundamental solutions $\Psi_{1/2}$ to be the \textit{transmission coefficients} $f_{\infty 1/2}$. As in \cite{Finster_2003},

\begin{thm}\label{intreptheorem}
For every $\Psi\in C_0^\infty(\mathbb{R}\times S^3)^4$,

\begin{equation}\label{intrep}
\left( e^{-it\mathcal{D}}\Psi\right)(y)=\frac{1}{\pi}\sum_{k_a,k_b,n\in\mathbb{Z}} \int_{-\infty}^\infty d\omega \: e^{-i\omega t} \sum_{a,b=1}^2t_{ab}^{k_ak_b\omega n} \Psi^{k_ak_b\omega n}_a(y) \langle \Psi^{k_ak_b\omega n}_b | \Psi \rangle,
\end{equation}
where for $\abs{\omega}<\abs{m}$ we have $t_{ab}=\delta_{a1}\delta_{b1}$, and for $\abs{\omega}>\abs{m}$,

\begin{equation}\label{taboscillatory}
t_{ab}=\frac{1}{2\pi}\int_0^{2\pi} \frac{t_a\overline{t_b}}{\abs{t_1}^2+\abs{t_2}^2} d\alpha,
\end{equation}
where the functions $t_a$ are given in terms of the transmission coefficients by
\begin{equation}\label{t1t2}
t_1(\alpha)=f_{\infty 2}^+ e^{-i\alpha} - f_{\infty 2}^- e^{i\alpha}, \qquad t_2(\alpha)=-f_{\infty 1}^+ e^{-i\alpha} + f_{\infty 1}^- e^{i\alpha}.
\end{equation}

The integral and series in (\ref{intrep}) converge weakly in $\mathcal{H}$.
\end{thm}

\begin{proof}
The proof follows along the lines of \cite{Finster_2003}, so we will only provide a brief motivation for the form of the integral representation (\ref{intrep}).

By the one-ended integral representation (\ref{Dx2}), at $t=0$ we have the ``completeness relation''

\begin{equation}\label{completenessrelation}
\Psi(y)=\frac{1}{\pi}\sum_{k_a,k_b\in\mathbb{Z}}\int_{-\infty}^\infty \: d\omega \sum_{n\in\mathbb{Z}} \Psi^{k_ak_b\omega n}_{x_2}(y)\langle \Psi^{k_ak_b\omega n}|\Psi\rangle.
\end{equation}

Since (\ref{completenessrelation}) is true for each value of $x_2$, we may take an average over $[x_2,x_2+T]$ and use Fubini's theorem to obtain

\begin{equation}\label{Taverage}
\Psi(y)=\frac{1}{\pi}\sum_{k_a,k_b\in\mathbb{Z}}\int_{-\infty}^\infty \: d\omega \sum_{n\in\mathbb{Z}}\left( \frac{1}{T}\int_0^T\: d\tau \Psi^{k_ak_b\omega n}_{x_2+\tau}(y)\langle \Psi^{k_ak_b\omega n}_{x_2+\tau}|\Psi\rangle\right).
\end{equation}

Expressing $\Psi^{k_ak_b\omega n}_{x_2+\tau}$ as a linear combination of the fundamental solutions $\Psi^{k_ak_b\omega n}_{1/2}$ by

\begin{equation}
\Psi^{k_ak_b\omega n}_{x_2+\tau}(y)=c_1(\tau)\Psi^{k_ak_b\omega n}_1 + c_2(\tau)\Psi^{k_ak_b\omega n}_2
\end{equation}

\noindent and choosing $c_1,c_2$ so that the boundary conditions (\ref{bc}) are satisfied, the expression in parentheses in (\ref{Taverage}) can be written as, suppressing some indices for clarity,

\begin{equation}\label{tabdefinition}
\frac{1}{T}\int_0^T d\tau \: \Psi_{x_2+\tau}(y)\langle \Psi_{x_2+\tau}|\Psi\rangle=\sum_{a,b=1}^2 t_{ab}(T) \Psi_a\langle\Psi_b|\Psi\rangle,
\end{equation}
\noindent where
\begin{equation}
t_{ab}(T)=\frac{1}{T}\int_0^T\: c_a(\tau) \overline{c_b(\tau)} \: d\tau.
\end{equation}

For $\abs{\omega}<\abs{m}$, in order for the spinor to satisfy the boundary conditions (\ref{bc}), the coefficient $c_2(\tau)$ of the exponentially growing fundamental solution $\Psi_2$ must tend to zero exponentially as $\tau\rightarrow \infty$, so, taking $T\rightarrow \infty$, we obtain $t_{ab}=\delta_{a1}\delta_{b1}$.

For $\abs{\omega}>\abs{m}$, the boundary conditions (\ref{bc}) and the normalization conditions (\ref{Xx2normalization}), (\ref{phaseconditions}), it can be shown that as $T\rightarrow \infty$ the coefficients $t_{ab}$ take the form (\ref{taboscillatory}) and (\ref{t1t2}); for further details, we refer the reader to\cite{Finster_2003}.

Taking the limit $T\rightarrow \infty$ in (\ref{Taverage}) and using (\ref{tabdefinition}), we obtain the integral representation (\ref{intrep}). The technical details justifying taking the limit $T\rightarrow \infty$ inside the series and the integral and showing that the resulting series and integral converge weakly in $\mathcal{H}$ can be found in \cite{Finster_2003}.

\end{proof}

\begin{thm}\label{maintheorem}
Consider the Cauchy problem for the Dirac equation in the non-extreme 5-dimensional Myers-Perry black hole geometry outside the event horizon

\begin{equation}
(\gamma^A(\partial_A+\Gamma_A)-m)\phi(t,y)=0,\qquad \phi(0,y)=\phi_0(y),
\end{equation}
where  $\phi_0\in L^2((r_+,\infty)\times S^3, d\mu)^4$ and $d\mu$ is the induced measure from the metric (\ref{metric}) on a constant $t$ hypersurface. Then, for any $\delta>0$ and $R>r_++\delta$, defining the compact annulus $K_{\delta,R}=\{r_++\delta\leq r \leq R\}$ and the future directed normal $\nu$, we have

\begin{equation}\label{diraccurrent}
\lim_{t\to\infty} \int_{K_{\delta,R}}\! \left(\overline{\phi}\gamma^j\phi\right)\!(t,y)\nu_j \:d\mu = 0.
\end{equation}
\end{thm}
\begin{proof}
We begin by showing that the transformed spinor $\Psi_0=\Delta^{1/4}\sqrt{r+ip\gamma^5}\phi_0$ is in $\mathcal{H}$. We have 

\begin{equation}
d\mu \sim \frac{r^5}{\sqrt{\Delta}}\sin\theta\cos\theta \: drd\theta d\varphi d\psi, \qquad d\nu = r^4 \sin\theta\cos\theta \: dx d\theta d\varphi d\psi.
\end{equation}

Since $\phi_0\sim \Delta^{-1/4} r^{-1/2}\Psi_0$, it follows from the change of radial variable (\ref{radialvariable}) that $\Psi_0\in\mathcal{H}$.

The rest of the proof follows along the lines of \cite{Finster_2003}; we hereby provide a sketch of the proof. By the density of $C_0^\infty(\mathbb{R}\times S^3)^4\subset \mathcal{H}$, the compactness of the domain $K_{\delta,R}$ and the weak convergence of the integral spectral representation (\ref{intrep}), it is shown in \cite{Finster_2003} that it is sufficient to consider the projection of a smooth test spinor $\Psi_I\in C_0^\infty(\mathbb{R}\times S^3)^4$ onto finitely many angular modes and azimuthal quantum numbers,

\begin{equation}\label{finitesuminitial}
\Psi_{k_{a0},k_{b0},n_0}(y)=\frac{1}{\pi}\sum_{|k_a|\leq k_{a0}} \sum_{|k_b|\leq k_{b0}} \sum_{|n|\leq n_0} \int_{-\infty}^\infty d\omega \sum_{a,b=1}^2 t_{ab}^{k_ak_b\omega n}\Psi_a^{k_ak_b\omega n}(y)\langle \Psi_b^{k_ak_b\omega n}|\Psi_I \rangle.
\end{equation}
Noting that the coefficients $t_{ab}$ are bounded, by Lemma \ref{lem1}, the integrand of (\ref{finitesuminitial}) for a fixed $k_a,k_b,n$,

\begin{equation}\label{knintegrand}
\sum_{a,b=1}^2 t_{ab}^{k_ak_b\omega n}\Psi_a^{k_ak_b\omega n}(y)\langle \Psi_b^{k_ak_b\omega n}|\Psi_I \rangle,
\end{equation}

\noindent is bounded, locally uniformly in $y$ and $\omega$. By the convergence of the integral representation of Theorem \ref{intreptheorem}, (\ref{knintegrand}) is in $L^1(\mathbb{R})^4$ as a function of $\omega$, with an $L^1$ bound locally uniform in $y$. Consider the solution to the Cauchy problem with initial data $\Psi_{k_{a0},k_{b0},n_0}(0,y)=\Psi_{k_{a0},k_{b0},n_0}(y)$,

\begin{equation}\label{finiteknsolution}
\Psi_{k_{a0},k_{b0},n_0}(t,y)=\frac{1}{\pi}\sum_{|k_a|\leq k_{a0}} \sum_{|k_b|\leq k_{b0}} \sum_{|n|\leq n_0} \int_{-\infty}^\infty d\omega \: e^{-i\omega t} \sum_{a,b=1}^2 t_{ab}^{k_ak_b\omega n}\Psi_a^{k_ak_b\omega n}(y)\langle \Psi_b^{k_ak_b\omega n}|\Psi_I \rangle.
\end{equation}

For each $k_a,k_b,n$, the integrand of the $d\omega$ integral in (\ref{finiteknsolution}) is the Fourier transform of (\ref{knintegrand}), and by the above $L^1$ bound it is $L^\infty$ in $t$, locally uniformly in $y$. By the Riemann-Lebesgue lemma \cite{reed_simon_1975}, it tends to zero as $t\rightarrow \infty$ for each $y$. Thus, since the $k_a,k_b,n$ sums in (\ref{finiteknsolution}) are finite, we have

\begin{equation}
\lim_{t\to\infty}\Psi_{k_{a0},k_{b0},n_0}(t,y)=0.
\end{equation}

Since the integrand $\overline{\phi}\gamma^j\phi$ of the Dirac current (\ref{diraccurrent}) is uniformly bounded and converges pointwise in $y$ to zero, the integral (\ref{diraccurrent}) converges to zero by the dominated convergence theorem.

\end{proof}

\section{Perspectives and future work}
The decay result of \cite{Finster_2003} does not hold in the case of the extreme Kerr geometry, in which the mass of the black hole is equal to the magnitude of its angular momentum and the two horizons coincide. In fact, in the exterior region of such a black hole, Schmid \cite{Schmid_2004} shows the existence of time-periodic $L^2$-normalizable bound state solutions to the Dirac equation. It would be of interest to determine whether $L^2$ bound state solutions exist in the extreme 5D Myers-Perry geometry.

The Boyer--Lindquist-type coordinates used in the present paper and in \cite{Finster_2003} are only well-defined outside the event horizon, as they have metric coefficients that are singular at the event and Cauchy horizons. In \cite{R_ken_2017}, Röken derives Eddington--Finkelstein-type coordinates on the Kerr black hole, an analytic extension of the Boyer-Lindquist-type coordinates through the horizons. In \cite{Finster_2016}, Finster and Röken construct a self-adjoint extension of the Dirac Hamiltonian in a class of spacetimes which include the case of spacetimes containing horizons, such as black holes in Eddington--Finkelstein-type coordinates, where the Dirac Hamiltonian is not elliptic. This allows them to derive in \cite{Finster_2018} an integral spectral representation for the Dirac propagator in the Kerr geometry using Stone's formula. Since the self-adjoint extension of \cite{Finster_2016} includes the case of the 5D Myers-Perry geometry in Eddington--Finkelstein-type coordinates, it would be relevant to derive an integral spectral representation for the Dirac propagator through the horizons in 5 dimensions, analogously to what was done in \cite{Finster_2018} in the Kerr geometry.

Using the integral spectral representation they obtained in \cite{Finster_2003} as main technical tool, Finster, Kamran, Smoller and Yau \cite{Finster_2002} derive pointwise decay rates and probability estimates for Dirac spinors in the Kerr-Newman geometry. In particular, they show that for generic initial data, Dirac spinors decay at a rate of $t^{-5/6}$, which is slower than the decay rate of $t^{-3/2}$ in Minkowski space. Using the integral representation (\ref{intrep}) as a starting point, it would be of interest to determine whether analogous results hold in the 5D Myers-Perry geometry.

The results of \cite{Finster_2003} and the present work illustrate considerable similarities between the behaviour of Dirac spinors in 4 and 5-dimensional rotating black hole metrics. It is a promising question to ask whether the radial asymptotic behaviour and local decay hold in higher dimensions. However, separation of variables for the Dirac equation in Myers-Perry metrics of dimension $\geq 6$ has not yet been done; it thus offers an interesting direction for future work.

\medskip

\noindent\textbf{Acknowledgments: } The author is grateful to Prof. Niky Kamran for guidance, advice and careful proofreading of the manuscript. The author also thanks Prof. Felix Finster for comments and suggestions. Finally, the author is grateful to the referees for helpful criticism and suggestions that lead to improvements in the paper. This work was supported by the NSERC Undergraduate Student Research Award program and by NSERC grant RGPIN 105490-2018.
\pagebreak
\appendix

\section{Nondegeneracy and regularity of the angular eigenfunctions}

To justify our choice of basis for $\mathcal{H}_{x_1,x_2}$ as eigenvectors of $\gamma^1\mathcal{A}^{k_ak_b}$, we need the following result.

\begin{pro}

For given $k_a, k_b$ and $\lambda\in\sigma(A)$, there is at most one eigensolution of (\ref{angular}), which we denote $Y^{k_ak_b}$. Both $\lambda$ and $Y^{k_a,k_b}$ depend smoothly on $\omega$ for all $\omega\in\mathbb{R}$.

\end{pro}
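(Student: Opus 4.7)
The plan is to treat the eigenvalue problem $AY = \lambda Y$ as a two-dimensional first-order linear ODE system on the open interval $(0,\pi/2)$ with regular singular points at both endpoints, coming from the $\csc\theta$, $\sec\theta$, $\cot\theta$, and $\tan\theta$ terms appearing in $L_\theta$, $C_a$ and $C_b$. Since $\gamma^1\mathcal{A}^{k_ak_b}$ is essentially self-adjoint on the Hilbert space with measure $\sin\theta\cos\theta\,d\theta$, admissible eigenfunctions must be square-integrable at both endpoints with respect to this weight, which imposes a Frobenius-type boundary condition at each endpoint.

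The first step is to perform the local Frobenius analysis. Near $\theta = 0$, rewrite the system in the form $Y' = \tfrac{1}{\theta} M_0\, Y + O(1)\, Y$, where the constant matrix $M_0$ is determined by the leading singular part of $L_\theta + C_a$, and whose eigenvalues (the indicial exponents at $\theta=0$) depend explicitly on $k_a + \tfrac{1}{2}$; independently of $\lambda$ and $\omega$, one exponent yields a locally square-integrable solution while the other does not. The same analysis at $\theta = \pi/2$, this time governed by $C_b$ and $k_b + \tfrac{1}{2}$, singles out a one-dimensional space of admissible local solutions. Propagating these one-dimensional admissible spaces through the regular interior of $(0,\pi/2)$, a global eigensolution exists precisely when the two admissible lines coincide, which can happen for at most one linearly independent solution; this gives the nondegeneracy statement.

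For smoothness in $\omega$, observe that the coefficients of $A$ depend analytically on $\omega$ (they are rational in $\omega$ with smooth $\theta$-dependence through $p(\theta)$). Using smooth dependence of solutions to linear ODEs with regular singular points on parameters, after factoring out the leading asymptotics $\theta^{\alpha_+}$ at the left endpoint and $(\pi/2-\theta)^{\beta_+}$ at the right endpoint, I can construct smoothly $\omega$- and $\lambda$-dependent admissible fundamental solutions $Y_0(\theta;\omega,\lambda)$ and $Y_{\pi/2}(\theta;\omega,\lambda)$. The eigenvalue condition then reduces to the vanishing of their Wronskian $W(\omega,\lambda)$, a smooth function of two real variables. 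The nondegeneracy established above is equivalent to the eigenvalues being simple, which in turn gives $\partial_\lambda W(\omega,\lambda) \neq 0$ at any eigenvalue, so the implicit function theorem produces a smooth local branch $\omega \mapsto \lambda_n(\omega)$; smoothness of the normalized eigenfunction $Y^{k_ak_b}(\theta;\omega)$ follows by choosing an $\omega$-smooth normalization constant and appealing to the smoothness of $Y_0$ in its parameters.

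The main obstacle will be carrying out the Frobenius analysis cleanly at both endpoints simultaneously: the indicial matrices $M_0$ and $M_{\pi/2}$ are not diagonal because $L_\theta + C_a$ and $L_\theta - C_a$ couple the two spinor components, so they must first be diagonalized and their exponents computed explicitly in terms of $k_a + \tfrac12$ and $k_b + \tfrac12$. I would also need to verify that for all admissible half-integer quantum numbers no borderline case arises in which both local exponents give $L^2$-admissible behavior, or in which the two exponents differ by an integer (producing logarithmic terms); a direct integrability estimate rules these out and confirms that the admissible local solution is genuinely one-dimensional at each singular endpoint, which is what the rest of the argument relies on.
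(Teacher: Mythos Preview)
Your proposal is correct and reaches the same conclusion, but by a genuinely different route from the paper's own proof.

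The paper does not perform a two-endpoint Frobenius analysis. Instead, it first quotes \cite{Daud__2012} to get that $\mathcal{A}$ (and hence $\gamma^1\mathcal{A}$) is self-adjoint with compact resolvent, so the spectrum is discrete with finite-dimensional eigenspaces. It then writes down the two fundamental solutions near $\theta=0$, with leading behaviour $\theta^{-k_a-1}$ and $\theta^{k_a}$, and rules one of them out by lifting the angular ODE to a two-variable elliptic PDE on a neighbourhood of the north pole of $S^2$ (following \cite[Appendix A]{Finster_2000}), applying elliptic regularity to force eigenfunctions into $L^2$, and checking that $\theta^{-l}\notin L^2$ for $l\geq 1$. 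A single endpoint therefore already forces the eigenspace to be one-dimensional. For smooth dependence on $\omega$, the paper simply observes that $A(\omega)$ is an analytic family of self-adjoint operators with simple spectrum and invokes Kato--Rellich.

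Your approach is more classical and self-contained: you stay entirely within ODE theory, do the indicial analysis at both endpoints with the weighted measure $\sin\theta\cos\theta\,d\theta$, and replace Kato--Rellich by the Wronskian connection-coefficient and the implicit function theorem. This buys you an argument that avoids both elliptic regularity and analytic perturbation theory. Two remarks are worth making. First, checking only one endpoint already suffices for the ``at most one'' statement, so your two-endpoint matching is slightly more than needed (though it is what you would want for existence). Second, your step ``simple eigenvalue $\Rightarrow \partial_\lambda W\neq 0$'' relies on the self-adjointness of the problem (so that algebraic and geometric multiplicity coincide); you invoke self-adjointness at the outset, but it would be worth making this link explicit. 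Your anticipated integer-gap worry is harmless here: the exponents at $\theta=0$ are $k_a$ and $-k_a-1$, and while their difference $2k_a+1$ is always a nonzero integer, the admissible solution is the one attached to the larger exponent, which is always log-free.
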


\begin{proof}
In $\mathcal{H}^{k_ak_b}_{x_1,x_2}$, the angular ODE $AY=\lambda Y$ is equivalent under the ansatz (\ref{ansatz}) to the eigenvalue problem $\gamma^1\mathcal{A}\Psi=\lambda\Psi$. Since $\gamma^1$ and $\mathcal{A}$ commute and it is known \cite[Appendix A]{Daud__2012} that $\mathcal{A}$ is self-adjoint on $L^2(S^3, d\Omega)^4$ and has compact resolvent, we obtain that $\gamma^1\mathcal{A}$ and therefore $A$ have real, discrete spectra with finite-dimensional eigenspaces.

The two fundamental solutions of (\ref{angular}) behave near $\theta=0$ as 

\begin{equation}\label{fundsols}
Y^{k_ak_b}_1=(\theta^{-k_a-1} + o(\theta^{-k_a-1}), o(\theta^{-k_a-1})), \qquad Y^{k_ak_b}_2=(o(\theta^{k_a}), \theta^{k_a}+o(\theta^{k_a})).
\end{equation}

For each $k_a\in\mathbb{Z}$, one of the solutions $Y^{k_ak_b}_{1/2}$ diverges faster than $\theta^{-1}$ as $\theta\rightarrow 0$. To prove nondegeneracy of the spectrum, we will follow \cite[Appendix A]{Finster_2000} and consider the partial differential equation for  $\alpha(\theta,\varphi)$ given by

\begin{equation}\label{pde}
\left( C_b(\theta)\sigma^3+iL_\theta\sigma^1-\mathscr{C}_a(\theta)\sigma^2\right)\alpha=\lambda\alpha,
\end{equation}

\noindent where $C_b$, $L_\theta$ are as in (\ref{angularabbrevs}) and

\begin{equation}
\mathscr{C}_a(\theta)=-\frac{i}{\sin\theta}\partial_\varphi -\frac{\omega(a^2-b^2)\sin\theta\cos\theta}{p}.
\end{equation}

On the domain $(\theta,\varphi)\in (0,\pi)\times(0,2\pi)$ with the boundary conditions

\begin{equation}
\lim_{\varphi\searrow 0}\alpha(\theta,\varphi)=-\lim_{\varphi\nearrow 2\pi}\alpha(\theta,\varphi)
\end{equation}

\noindent and under the ansatz

\begin{equation}
\alpha(\theta,\varphi)=e^{-i(k_a+\frac{1}{2})\varphi}\begin{pmatrix}
iY_+(\theta)\\
Y_-(\theta)
\end{pmatrix},
\end{equation}
the PDE (\ref{pde}) simplifies to the angular ODE (\ref{angular}). We may write (\ref{pde}) as 

\begin{equation}
\left(i\sigma^1\left(\partial_\theta + \frac{\cot\theta}{2}\right) + i\sigma^2\frac{1}{\sin\theta}\partial_\varphi+\mathcal{A}_0\right)\alpha=\lambda\alpha,
\end{equation}

\noindent where 

\begin{equation}
\mathcal{A}_0=-i\frac{\tan\theta}{2}\sigma^1+\frac{\omega(a^2-b^2)\sin\theta\cos\theta}{p}\sigma^2+\left(-\frac{(k_b+\frac{1}{2})}{\cos\theta}+mp-\frac{\omega ab}{p}\right)\sigma^3
\end{equation}

\noindent is smooth in a neighbourhood of $\theta=0$. We perform the transformation $\tilde{\alpha}=U\alpha$ with

\begin{equation}
U(\theta,\varphi)=\exp\Big(-i\frac{\varphi}{2}\sigma^3\Big)\exp\Big(-i\frac{\theta}{2}\sigma^2\Big).
\end{equation}

As in \cite[Appendix A]{Finster_2000}, the equation $\tilde{\mathcal{A}}\tilde{\alpha}=\lambda\tilde{\alpha}$ thus obtained can be seen as an eigenvalue problem on $S^2$. However, due to the singularities of $\mathcal{A}_0$ at $\theta=\frac{\pi}{2}$, it is only smooth in some neighbourhood $V\subset S^2$ of $\theta=0$. In fact $\tilde{\mathcal{A}}$ is first order and elliptic in $V$, and thus by the elliptic regularity theorem \cite{taylor_2011}, its eigensolutions $\tilde{\alpha}$ are in the Sobolev space $H^1(V)\subset L^2(V, d\Omega)$. However, choosing $\epsilon>0$ such that $V_\epsilon = \{(\theta,\varphi)\in S^2|\theta<\epsilon\}\subset V$, we notice that for all $l\geq 1$,

\begin{equation}
\left\Vert\frac{1}{\theta^l}\right\Vert^2_{L^2(V,d\Omega)}=\int_{V}\frac{1}{\theta^{2l}}d\Omega\geq \int_{V_\epsilon}\frac{1}{\theta^{2l}}d\Omega=2\pi\int_0^{\epsilon}\frac{\sin\theta}{\theta^{2l}} d\theta = \infty.
\end{equation}

We can therefore rule out one of the two fundamental solutions (\ref{fundsols}) and thus conclude that the spectrum of $A$ is nondegenerate.

As $A=A(\omega)$ is an analytic family of operators with nondegenerate spectra, by the Kato-Rellich theorem \cite{reed_simon_1978} its eigenvalues $\lambda(\omega)$ and eigenvectors $Y^{k_ak_b}(\omega)$ depend smoothly on $\omega$. 

\end{proof}

\pagebreak
\bibliography{refs}
\bibliographystyle{amsplain}
\end{document}